\documentclass[letterpaper, 10pt, conference]{ieeeconf}  

\usepackage{graphicx}
\usepackage{subfig}
\usepackage{amsmath}
\usepackage{amsthm} 
\usepackage{amssymb}
\usepackage[font=footnotesize]{caption} 
\usepackage{subfig} 
\usepackage[noadjust]{cite} 
\usepackage{color}
\usepackage{algorithm} 
\usepackage{algpseudocode} 
\usepackage{enumerate}
\usepackage[hidelinks,colorlinks=false]{hyperref}
\usepackage{setspace}
\usepackage{arydshln} 

\usepackage{tikz}
\usetikzlibrary{calc} 
\usetikzlibrary{shapes} 
\usetikzlibrary{chains}
\usetikzlibrary{fit}
\usetikzlibrary{arrows}
\usetikzlibrary{decorations.text} 
\usetikzlibrary{decorations.markings}

\newtheorem{theorem}{Theorem}
\newtheorem{lemma}{Lemma}

\newtheorem{remark}{Remark}

\newtheorem{corollary}{Corollary}

\newtheorem{definition}{Definition}

\newcommand{\Null}{\mathrm{Null}}
\newcommand{\Range}{\mathrm{Range}}
\newcommand{\one}{\mathbf{1}}
\newcommand{\rank}{\mathrm{rank}}
\newcommand{\myspan}{\mathrm{span}}

\newcommand{\R}{\mathbb{R}}
\newcommand{\G}{\mathcal{G}}
\newcommand{\E}{\mathcal{E}}
\newcommand{\V}{\mathcal{V}}
\newcommand{\N}{\mathcal{N}}

\renewcommand{\L}{\mathcal{B}}


\graphicspath{{figures/}}

\begin{document}

\title{Laman Graphs are Generically Bearing Rigid in Arbitrary Dimensions}
\author{Shiyu Zhao, Zhiyong Sun, Daniel Zelazo, Minh-Hoang Trinh, and Hyo-Sung Ahn 
\thanks{S. Zhao is with the Department of Automatic Control and Systems Engineering, University of Sheffield, United Kingdom.
{\tt\small szhao@sheffield.ac.uk}}
\thanks{Z. Sun is with the Research School of Engineering, Australian National University, Australia.
{\tt\small zhiyong.sun@anu.edu.au}}
\thanks{D. Zelazo is with the Faculty of Aerospace Engineering, Technion--Israel Institute of Technology, Israel.
{\tt\small dzelazo@technion.ac.il}}
\thanks{M.-H. Trinh and H.-S. Ahn are with the Distributed Control \& Autonomous Systems Lab, Gwangju Institute of Science and Technology, Korea.
{\tt\small \{trinhhoangminh, hyosung\}@gist.ac.kr}}
}
\IEEEoverridecommandlockouts
\maketitle
\begin{abstract}
This paper addresses the problem of constructing bearing rigid networks in arbitrary dimensions. We first show that the bearing rigidity of a network is a generic property that is critically determined by the underlying graph of the network. A new notion termed generic bearing rigidity is defined for graphs. If the underlying graph of a network is generically bearing rigid, then the network is bearing rigid for almost all configurations; otherwise, the network is not bearing rigid for any configuration. As a result, the key to construct bearing rigid networks is to construct generically bearing rigid graphs. The main contribution of this paper is to prove that Laman graphs, which can be generated by the Henneberg construction, are generically bearing rigid in arbitrary dimensions. As a consequence, if the underlying graph of a network is Laman, the network is bearing rigid for almost all configurations in arbitrary dimensions.
\end{abstract}
\overrideIEEEmargins

\section{Introduction}

Consider a network defined as a graph with its vertices mapped to a set of distinct points in the Euclidean space. Such a network may represent a sensor network or multi-robot system. The bearing rigidity is a fundamental property of the network that indicates whether the geometric pattern can be uniquely determined by the inter-node bearings. The bearing rigidity theory has received increasing research attention in recent years in the area of multi-robot formation control and sensor network localization because it provides an architectural condition for the control and estimation algorithms to converge \cite{eren2003,bishopconf2011rigid,Eren2012IJC,zhao2014TACBearing,zhao2015NetLocalization}.
The necessary and sufficient conditions for the bearing rigidity of networks in arbitrary dimensions have been proved using the bearing rigidity matrix \cite[Thm~4]{zhao2014TACBearing} and the bearing Laplacian matrix \cite[Lem~2]{zhao2015NetLocalization}, respectively. However, the problem of \emph{how to construct} bearing rigid networks is still unsolved to a large extent.

Construction of bearing rigid networks is practically important for the design of the configurations and interaction topologies of sensor networks or multi-robot formations. The first contribution of this paper is to show that the key to construct bearing rigid networks is to construct appropriate underlying graphs.
In particular, we show that the bearing rigidity of a network is critically determined by its underlying graph rather than the configuration of the nodes. We define a new notion termed \emph{generic bearing rigidity} for graphs. When a graph is generically bearing rigid, then the network is bearing rigid for almost all configurations; otherwise, the network is not bearing rigid for any configuration.
As a result, construction of generically bearing rigid graphs is the key to construct bearing rigid networks.

One of the most well-known graph construction methods is the Henneberg construction, which can be used to construct Laman graphs \cite{Laman1970,WhiteleyHenneberg1985,Whiteley2005Pseudotriangulation,Anderson2008CSM,Jackson2007NotesRigidity,ConnellyBook}.
Laman graphs have played important roles in the distance rigidity theory.\footnote{In order to distinguish from the bearing rigidity theory, we refer the conventional rigidity theory defined based on inter-node distances as \emph{distance rigidity theory}.} In particular, by merely considering generic configurations, a network is minimally distance rigid if and only if the underlying graph is Laman \cite{Laman1970,WhiteleyHenneberg1985,Whiteley2005Pseudotriangulation,Anderson2008CSM,Jackson2007NotesRigidity,ConnellyBook}. This result is known as Laman's Theorem. It is notable that Laman's Theorem is valid merely in two dimensions and a similar result does not exist in higher dimensions.

In this paper, we show that Laman graphs also play important roles in the bearing rigidity theory. The main contribution of this paper is to prove that Laman graphs are generically bearing rigid in arbitrary dimensions. As a result, if the underlying graph of a network is Laman, the network is bearing rigid for almost all configurations in arbitrary dimensions. Since a Laman graph has $2n-3$ edges ($n$ denotes the number of nodes), it is implied that $2n-3$ edges are sufficient to ensure the bearing rigidity of a network in an arbitrary dimension. Furthermore, we show that being Laman is merely sufficient but not necessary for a graph to be generically bearing rigid. A counterexample shows that graphs with less than $2n-3$ edges (hence not Laman) may still be generically bearing rigid. However, if restricting to the two-dimensional plane, we can prove that Laman graphs are both necessary and sufficient for generic bearing rigidity.
Finally, in our previous work \cite{TrinhTAC2016}, the Henneberg construction method has been utilized to construct a special type of directed networks where each node has at most two outgoing edges. In the present paper we consider general Laman graphs without any restrictions.

\section{Preliminaries}

\subsection{Notations of Networks}

Let $\G=(\V,\E)$ be a graph consisting of the vertex set $\V=\{1,\dots,n\}$ and the edge set $\E\subseteq\V\times\V$. In this paper, we only consider undirected graphs where $(i,j)\in\E\Leftrightarrow(j,i)\in\E$. If $(i,j)\in\E$, then $i$ and $j$ are adjacent and vertex $j$ is the neighbor of vertex $i$. The set of the neighbors of vertex $i$ is denoted as $\N_i=\{j\in\V: (i,j)\in\E\}$. A graph $\G_s=(\V_s,\E_s)$ is called a subgraph of $\G$ if $\V_s\subseteq\V$ and $\E_s\subseteq(\V_s\times\V_s)\cap\E$. A subgraph is called a spanning subgraph if it is connected and $\V_s=\V$.

Consider $n$ nodes in $\R^d$ where $n\ge2$, $d\ge2$. Let $p_i\in\R^d$ be the position of node $i$ and $p=[p_1^T,\dots,p_n^T]^T\in\mathbb{R}^{dn}$ be the configuration of these nodes. Assume $p_i\ne p_j$ for all $i$ and $j$ throughout this paper.
A network, denoted as $(\G,p)$, is the graph $\G$ with its vertices mapped to the points $\{p_i\}_{i=1}^n$.
A network may also be called framework or formation under different circumstances.
For edge $(i,j)\in\E$, let
\begin{align*}
g_{ij}=\frac{p_j-p_i}{\|p_j-p_i\|}
\end{align*}
be the unit vector pointing from $p_i$ to $p_j$. The unit vector $g_{ij}$ represents the relative bearing of node $j$ with respect to node $i$.
For $g_{ij}$, define $P: \R^d\rightarrow\R^{d\times d}$ as
\begin{align*}
    P(g_{ij}) =I_d - g_{ij}g_{ij}^T ,
\end{align*}
where $I_d\in\R^{d\times d}$ is the identity matrix. For notational simplicity, we denote $P_{g_{ij}}=P(g_{ij})$.
The matrix $P_{g_{ij}}$ is an orthogonal projection matrix that geometrically projects any vector onto the orthogonal compliment of $g_{ij}$.
It can be verified that $P_{g_{ij}}^T =P_{g_{ij}}$, $P_{g_{ij}}^2=P_{g_{ij}}$, and $P_{g_{ij}}$ is positive semi-definite.
Since $\Null(P_{g_{ij}})=\myspan\{g_{ij}\}$,  for any vector $x\in\R^d$, $P_{g_{ij}}x=0$ if and only if $x$ is parallel to $g_{ij}$.
In this paper $\Null(\cdot)$ and $\Range(\cdot)$ denote the null and range space of a matrix, respectively.
Let $\one_n\triangleq[1,\dots,1]^T\in\R^n$ and $\|\cdot\|$ be the Euclidian norm of a vector or the spectral norm of a matrix, and $\otimes$ be the Kronecker product.
\subsection{Bearing Laplacian and Bearing Rigidity}
We next introduce an important matrix that will be used throughout the paper.
For $(\G,p)$, let $\L\in\R^{dn\times dn}$ be the bearing Laplacian with its $ij$th subblock matrix as \cite{zhao2015NetLocalization}
\begin{align*}
[\L]_{ij}=\left\{
  \begin{array}{ll}
      \mathbf{0}_{d\times d}, & i\ne j, (i,j)\notin\E, \\
      -P_{g_{ij}}, & i\ne j, (i,j)\in\E, \\ 
      \sum_{k\in\N_i}P_{g_{ik}}, & i=j, i\in\V. \\
  \end{array}
\right.
\end{align*}
The bearing Laplacian $\L$ is a matrix-weighted graph Laplacian. It is jointly determined by the underlying graph and the inter-neighbor bearings of the network. It is a symmetric matrix since the graph is assumed to be undirected. For any network, the bearing Laplacian is positive semi-definite because for any $x=[x_1^T ,\dots,x_n^T ]^T \in\R^{dn}$
\begin{align*}
x^T \L x=\frac{1}{2}\sum_{(i,j)\in\E} (x_i-x_j)^TP_{g_{ij}}(x_i-x_j)\ge0.
\end{align*}
For any network, we always have $\rank(\L)\le dn-d-1$ and $\myspan\{\one\otimes I_d,p\}\subseteq\Null(\L)$ \cite[Lem~2]{zhao2015NetLocalization}.

A network is \emph{infinitesimally bearing rigid} if and only if the positions of the nodes in the network can be uniquely determined up to a translational and scaling factor (in other words, the shape of the network can be uniquely determined). The formal definition of infinitesimal bearing rigidity can be found in \cite[Def~5]{zhao2014TACBearing}. Although there exist other types of bearing rigidity such as bearing rigidity and global bearing rigidity, they are not of interest for this paper. For the sake of simplicity, infinitesimal bearing rigidity will be simply referred to as \emph{bearing rigidity} in this paper. A necessary and sufficient condition of bearing rigidity is given below.

\begin{lemma}[\textbf{Condition of Bearing Rigidity \cite[Lem~2]{zhao2015NetLocalization}}]\label{lemma_NSConditionForBearingRigidity}
A network $(\G,p)$ in $\R^d$ is bearing rigid if and only if $\rank(\L)=dn-d-1$ or equivalently $\Null(\L)=\myspan\{\one_n\otimes I_d,p\}$.
\end{lemma}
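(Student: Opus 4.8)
The plan is to reduce the statement to an identity between two null spaces. By the definition of infinitesimal bearing rigidity \cite[Def~5]{zhao2014TACBearing}, the network $(\G,p)$ is bearing rigid exactly when every infinitesimal motion of the nodes that preserves all inter-neighbor bearings is trivial, i.e.\ a superposition of a uniform translation and a uniform scaling; in matrix form this reads $\Null(\RB)=\myspan\{\one_n\otimes I_d,p\}$, where $\RB$ is the bearing rigidity matrix (the Jacobian of the stacked bearing map). It therefore suffices to prove that $\Null(\RB)=\Null(\L)$ and then to translate the equality of null spaces into the rank condition.

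First I would compute $\Null(\RB)$ explicitly. Differentiating $g_{ij}=(p_j-p_i)/\|p_j-p_i\|$ along a velocity $x=[x_1^T,\dots,x_n^T]^T$ gives $\dot g_{ij}=\|p_j-p_i\|^{-1}P_{g_{ij}}(x_j-x_i)$, since the component of $x_j-x_i$ parallel to $g_{ij}$ only rescales the edge vector without changing its direction. Hence an infinitesimal motion preserves every bearing to first order precisely when $P_{g_{ij}}(x_i-x_j)=0$ for all $(i,j)\in\E$, so $\Null(\RB)=\{x\in\R^{dn}:P_{g_{ij}}(x_i-x_j)=0,\ \forall(i,j)\in\E\}$.

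Next I would identify this set with $\Null(\L)$. Since $\L$ is positive semi-definite, $x\in\Null(\L)$ iff $x^T\L x=0$. Using the quadratic-form identity $x^T\L x=\frac{1}{2}\sum_{(i,j)\in\E}(x_i-x_j)^TP_{g_{ij}}(x_i-x_j)$ together with the fact that $P_{g_{ij}}$ is symmetric and idempotent, each summand equals $\|P_{g_{ij}}(x_i-x_j)\|^2\ge0$. A sum of nonnegative terms vanishes iff every term vanishes, so $x^T\L x=0$ iff $P_{g_{ij}}(x_i-x_j)=0$ for all edges. Therefore $\Null(\L)=\Null(\RB)$, and bearing rigidity is equivalent to $\Null(\L)=\myspan\{\one_n\otimes I_d,p\}$.

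Finally I would convert this into the rank condition. The subspace $\myspan\{\one_n\otimes I_d,p\}$ is always contained in $\Null(\L)$: a translation gives $x_i-x_j=0$, while the scaling mode $x=p$ gives $x_i-x_j=p_i-p_j$, which is parallel to $g_{ij}$ and hence annihilated by $P_{g_{ij}}$. Its dimension is exactly $d+1$, since the columns of $\one_n\otimes I_d$ are independent and, as the $p_i$ are distinct, $p\notin\myspan\{\one_n\otimes I_d\}$. Consequently $\Null(\L)=\myspan\{\one_n\otimes I_d,p\}$ holds iff $\dim\Null(\L)=d+1$, i.e.\ $\rank(\L)=dn-d-1$, which yields both forms of the stated condition. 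The main obstacle I anticipate is conceptual rather than computational: faithfully arguing that the infinitesimal definition of bearing rigidity is captured by $\Null(\RB)$, and that the trivial-motion space is precisely $\myspan\{\one_n\otimes I_d,p\}$ of dimension $d+1$ --- in particular that scaling contributes only the single direction $p$ and that $p$ is linearly independent of the translations. Once these are in place, the null-space identity and the rank count follow routinely.
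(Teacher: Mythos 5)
Your proposal is correct. Note, however, that the paper does not prove this lemma at all: it is imported verbatim from \cite[Lem~2]{zhao2015NetLocalization}, so there is no internal proof to compare against. What you have written is a valid self-contained derivation, and it follows the standard route used in the cited literature: (i) the bearing rigidity matrix $\RB$ has null space $\{x: P_{g_{ij}}(x_j-x_i)=0,\ \forall (i,j)\in\E\}$ by differentiating the bearing map; (ii) this set coincides with $\Null(\L)$ because $\L$ is positive semi-definite and its quadratic form is a sum of terms $\|P_{g_{ij}}(x_i-x_j)\|^2$ (symmetry plus idempotence of $P_{g_{ij}}$); (iii) the trivial-motion space $\myspan\{\one_n\otimes I_d,p\}$ always lies in $\Null(\L)$ and has dimension exactly $d+1$ since the $p_i$ are not all equal, so equality of the two subspaces is equivalent to $\rank(\L)=dn-d-1$. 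The only step that rests on an external definition rather than computation is the identification of infinitesimal bearing rigidity with $\Null(\RB)=\myspan\{\one_n\otimes I_d,p\}$, which is precisely \cite[Def~5, Thm~4]{zhao2014TACBearing}; you flagged this dependence explicitly, which is the right thing to do, since the trivial motions (translations and scalings) must be shown to span exactly that $(d+1)$-dimensional space. No gaps.
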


The condition in Lemma~\ref{lemma_NSConditionForBearingRigidity} provides a convenient way to examine the bearing rigidity of a given network. It will be used later to analyze the construction of bearing rigid networks.

\section{Generic Bearing Rigidity of Graphs}

In this section, we show that the bearing rigidity of a network is a generic property that is critically determined by the underlying graph rather than the configuration. We first define the following notion that will be used throughout the paper.

\begin{definition}[\textbf{Generically Bearing Rigid Graphs}]
A graph $\G$ is generically bearing rigid in $\R^d$ if there exists at least one configuration $p$ in $\R^d$ such that $(\G,p)$ is bearing rigid.
\end{definition}

Generically bearing rigid graphs have the following properties.

\begin{lemma}[\textbf{Density of Generically Bearing Rigid Graphs}]\label{lemma_propertyGenericRigidGraphs}
If $\G$ is generically bearing rigid in $\R^d$, then $(\G,p)$ is bearing rigid for almost all $p$ in $\R^d$ in the sense that the set of $p$ where $(\G,p)$ is not bearing rigid is of measure zero. Moreover, for any configuration $p_0$ and any small constant $\epsilon>0$, there always exists a configuration $p$ such that $(\G,p)$ is bearing rigid and $\|p-p_0\|<\epsilon$.
\end{lemma}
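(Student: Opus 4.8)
The plan is to exploit the fact that the bearing Laplacian $\L$ depends real-analytically on the configuration $p$, while the bearing-rigidity condition of Lemma~\ref{lemma_NSConditionForBearingRigidity} is a \emph{maximal-rank} condition. Write $N\triangleq dn-d-1$. Since $\rank(\L)\le dn-d-1$ holds for every network, Lemma~\ref{lemma_NSConditionForBearingRigidity} says that $(\G,p)$ is bearing rigid if and only if $\rank(\L)=N$, equivalently if and only if at least one of the $N\times N$ minors of $\L$ is nonzero. Let $\mathcal{D}\subseteq\R^{dn}$ denote the open set of configurations in which all nodes are distinct, so that every projection $P_{g_{ij}}=I_d-(p_j-p_i)(p_j-p_i)^T/\|p_j-p_i\|^2$ is well defined. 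Because $d\ge2$, the set $\mathcal{D}$ is obtained from $\R^{dn}$ by deleting finitely many subspaces of codimension $d\ge2$, so it is open, connected, and its complement has Lebesgue measure zero.

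First I would observe that on $\mathcal{D}$ every entry of $\L$ is a real-analytic (indeed rational, with nonvanishing denominator) function of $p$, since each $P_{g_{ij}}$ is. Consequently each $N\times N$ minor $m_k(p)$, being a polynomial in the entries of $\L$, is real-analytic on the connected open set $\mathcal{D}$. By hypothesis $\G$ is generically bearing rigid, so there is a configuration $p^{\ast}\in\mathcal{D}$ with $\rank(\L(p^{\ast}))=N$; hence some minor satisfies $m_k(p^{\ast})\ne0$, which shows that $m_k$ is not identically zero on $\mathcal{D}$.

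Next I would invoke the standard fact that the zero set of a real-analytic function that is not identically zero on a connected open set has Lebesgue measure zero. Applying this to $m_k$, the set $Z\triangleq\{p\in\mathcal{D}:m_k(p)=0\}$ has measure zero. For every $p\in\mathcal{D}\setminus Z$ we have $\rank(\L(p))\ge N$ and therefore $\rank(\L(p))=N$, so $(\G,p)$ is bearing rigid. The set of configurations in $\R^{dn}$ for which $(\G,p)$ is not bearing rigid is thus contained in $Z\cup(\R^{dn}\setminus\mathcal{D})$, a union of two measure-zero sets, proving the first claim. The second claim is then immediate: a set of measure zero has empty interior, so its complement is dense, and hence every ball $\{p:\|p-p_0\|<\epsilon\}$ contains a bearing-rigid configuration.

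The only genuinely delicate point is the passage from ``one rigid configuration exists'' to ``almost all configurations are rigid,'' which rests on (i) the analyticity of the minors in $p$ despite the normalization by $\|p_j-p_i\|$, and (ii) the measure-zero property of the zero set of a nontrivial analytic function, for which the connectedness of $\mathcal{D}$ (guaranteed by $d\ge2$) is essential. If one prefers to avoid the analytic machinery, the same conclusion follows by clearing denominators: multiplying $m_k$ by $\prod_{(i,j)\in\E}\|p_j-p_i\|^{2N}$ yields a genuine polynomial in $p$ that is nonzero at $p^{\ast}$, hence not identically zero, and the zero set of a nonzero polynomial is an algebraic variety of measure zero.
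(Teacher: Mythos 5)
Your proof is correct, and its core mechanism --- characterizing failure of bearing rigidity by the vanishing of minors of $\L$, clearing the denominators $\|p_j-p_i\|^2$ to obtain polynomials, and invoking the fact that the zero set of a nontrivial polynomial (or real-analytic function) has Lebesgue measure zero --- is the same one the paper uses for the first claim. The differences are worth noting. For the density claim, the paper gives a separate, self-contained argument: it restricts the minors to the segment $p_\alpha=(1-\alpha)p_0+\alpha p_1$ and uses the fact that a nontrivial univariate polynomial has finitely many roots to find a rigid configuration in any $\epsilon$-ball around $p_0$. You instead observe that a measure-zero set has empty interior, so the rigid configurations are automatically dense; this is shorter and equally valid, the only trade-off being that it leans on the measure-theoretic statement rather than giving an elementary argument. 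You are also more careful than the paper on two technical points: you restrict to the open set $\mathcal{D}$ of configurations with distinct nodes and justify that it is connected (codimension $d\ge2$), which is exactly what legitimizes the real-analytic zero-set argument; and you track only the single minor that is nonzero at the witness configuration, which suffices because $\rank(\L)\le dn-d-1$ holds for every network. Finally, you use minors of the correct order $(dn-d-1)\times(dn-d-1)$, whereas the paper writes $(dn-d-2)\times(dn-d-2)$; the latter appears to be a typo, since rank $dn-d-1$ is certified by a nonvanishing minor of order $dn-d-1$.
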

\begin{proof}
Let $\Omega$ be the set of $p$ where $\rank(\L)<dn-d-1$. Suppose $f(p)$ is the vector consisting of all the $(dn-d-2)\times (dn-d-2)$ minors of $\L$. Then, $\Omega$ is the set of solutions to $f(p)=0$. Although the elements of $p$ appear on the denominators in the projection matrices in $\L$, the equation $f(p)=0$ can be converted to a set of polynomial equations of $p$ by multiplying the denominators on both sides of $f(p)=0$.
As a result, $\Omega$ is an algebraic set and hence it is either the entire space or of measure zero \cite{zeroSetPolynomial}. Since there exists $p$ such that $(\G,p)$ is bearing rigid, $\Omega$ is not the entire space, then it is of measure zero and consequently $(\G,p)$ is bearing rigid for almost all $p$.

For the sake of completeness, we next present an elementary proof of the density of bearing rigid networks. Since $\G$ is generically bearing rigid, there exists $p_1$ such that $(\G,p_1)$ is bearing rigid. For the given configuration $p_0$, define
\begin{align*}
p_\alpha=(1-\alpha)p_0+\alpha p_1.
\end{align*}
When $\alpha=0$, $p_\alpha=p_0$; when $\alpha=1$, $p_\alpha=p_1$. For any $\epsilon>0$, there always exists a sufficiently small $\alpha_\epsilon$ such that $\|p_{\alpha}-p_0\|<\epsilon$ for all $\alpha\in(0,\alpha_\epsilon)$. Let $f(\alpha)$ be the vector consisting of all the $(dn-d-2)\times (dn-d-2)$ minors of $\L$ of the network $(\G,p_\alpha)$. Then $f(\alpha)\ne0$ if and only if $(\G,p_\alpha)$ is bearing rigid. Since $f(1)\ne0$, $f(\alpha)$ is not identically zero. Since $f(\alpha)=0$ can be converted to a set of polynomial equations of $\alpha$, $f(\alpha)=0$ has finite zero roots. As a result, there always exists $\alpha_1\in(0,\alpha_\epsilon)$ such that $f(\alpha_1)\ne0$. Then, the network $(\G,p_{\alpha_1})$ is bearing rigid and satisfies $\|p_{\alpha_1}-p_0\|<\epsilon$.
\end{proof}

If a graph is not generically bearing rigid, there does not exist any configuration such that the network is bearing rigid. This is implied by the definition of generic bearing rigidity. See Fig.~\ref{fig_demoGenericGraph}(a) for an illustration.
If a graph is generically bearing rigid, the corresponding networks are bearing rigid for all configurations except some special ones that form a set of measure zero. This is implied by Lemma~\ref{lemma_propertyGenericRigidGraphs}. See Fig.~\ref{fig_demoGenericGraph}(b) for an illustration. If a network is not bearing rigid but its graph is generically bearing rigid, then there always exists a sufficiently small perturbation of the configuration that can make the network bearing rigid.

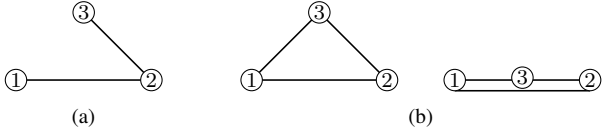
\begin{figure}[t]
  \centering
  \def\myscale{0.45}
  \def\length{2}
\def\radius{9pt}

\subfloat[]{
\begin{tikzpicture}[scale=\myscale]
\coordinate (1) at (-\length,0);
\coordinate (2) at (\length,0);
\coordinate (3) at (0,\length);
\draw [semithick,] (1)--(2)--(3);
\draw [fill=white](1) circle [radius=\radius];
\draw (1) node[] {\footnotesize$1$};
\draw [fill=white](2) circle [radius=\radius];
\draw (2) node[] {\footnotesize$2$};
\draw [fill=white](3) circle [radius=\radius];
\draw (3) node[] {\footnotesize$3$};
\end{tikzpicture}
}\qquad
\subfloat[]{
\begin{tikzpicture}[scale=\myscale]
\coordinate (1) at (-\length,0);
\coordinate (2) at (\length,0);
\coordinate (3) at (0,\length);
\draw [semithick,] (1)--(2)--(3)--cycle;
\draw [fill=white](1) circle [radius=\radius];
\draw (1) node[] {\footnotesize$1$};
\draw [fill=white](2) circle [radius=\radius];
\draw (2) node[] {\footnotesize$2$};
\draw [fill=white](3) circle [radius=\radius];
\draw (3) node[] {\footnotesize$3$};
\begin{scope}[shift={(3*\length,0)},rotate=0]
\coordinate (1) at (-\length,0);
\coordinate (2) at (\length,0);
\coordinate (3) at (0,0);
\draw [semithick,] (1)--(3);
\draw [semithick,] (2)--(3);
\draw [semithick,] ($(1)+(0,-\radius)$)--($(2)+(0,-\radius)$);
\draw [fill=white](1) circle [radius=\radius];
\draw (1) node[] {\footnotesize$1$};
\draw [fill=white](2) circle [radius=\radius];
\draw (2) node[] {\footnotesize$2$};
\draw [fill=white]($(3)+(0,2pt)$) circle [radius=\radius];
\draw ($(3)+(0,2pt)$) node[] {\footnotesize$3$};
\end{scope}
\end{tikzpicture}
}
  \caption{The graph of the network in (a) is not generically bearing rigid. As a result, the network is not bearing rigid for any configuration. The graph of the networks in (b) is generically bearing rigid. The network is bearing rigid for almost all configurations except those where the three nodes are collinear.}
  \vskip-10pt
  \label{fig_demoGenericGraph}
\end{figure}

\section{Construction of Generically Bearing Rigid Graphs}

In the preceding section, we have shown that the key to construct bearing rigid networks is to construct generically bearing rigid graphs.
In this section, we address how to construct generically bearing rigid graphs. We start from the definition of an important type of graphs.

\begin{definition}[\textbf{Laman Graphs \cite{WhiteleyHenneberg1985,Whiteley2005Pseudotriangulation,Jackson2007NotesRigidity,Anderson2008CSM,ConnellyBook}}]
A graph $\G=(\V,\E)$ is Laman if $|\E|=2|\V|-3$ and every subset of $k\ge2$ vertices spans at most $2k-3$ edges.
\end{definition}

The above is a combinatorial definition of Laman graphs. Its intuition is that the edges should be distributed evenly in a Laman graph.
Laman graphs may also be characterized by the Henneberg construction as described below.

\begin{definition}[\textbf{Henneberg Construction \cite{WhiteleyHenneberg1985,Whiteley2005Pseudotriangulation,Jackson2007NotesRigidity,Anderson2008CSM,ConnellyBook}}]\label{def_constructionmethod}
Given a graph $\G=(\V,\E)$, a new graph $\G'=(\V',\E')$ is formed by adding a new vertex $v$ to $\G$ and performing one of the following two operations:
\begin{enumerate}[(a)]
\item \emph{Vertex addition}: connect vertex $v$ to any two existing vertices $i,j\in\V$. In this case, $\V'=\V\cup \{v\}$ and $\E'=\E\cup\{(v,i),(v,j)\}$. See Fig.~\ref{fig_demoHennebergConstruction}(a) for an illustration.
\item \emph{Edge splitting}: consider three vertices $i,j,k\in\V$ with $(i,j)\in\E$ and connect vertex $v$ to $i,j,k$ and delete $(i,j)$. In this case, $\V'=\V\cup \{v\}$ and $\E'=\E\cup\{(v,i),(v,j),(v,k)\}\setminus\{(i,j)\}$. See Fig.~\ref{fig_demoHennebergConstruction}(b) for an illustration.
\end{enumerate}
\end{definition}

\begin{figure}[h]
  \centering
  \def\myscale{0.45}
  \def\length{2.5}
\def\radius{10pt}
\def\linkLength{1}
\def\crossLength{8pt}
\subfloat[Vertex addition]{
\begin{tikzpicture}[scale=\myscale]
\coordinate (v) at (0,\length);
\coordinate (i) at (-\length,0);
\coordinate (j) at (\length/3,-\length/2.5);
\draw [blue,thick] (i)--(v)--(j);
\draw ($(i)$)--($(i)+(\linkLength,-\linkLength/3)$);
\draw ($(i)$)--($(i)+(-\linkLength/2,-\linkLength)$);
\draw ($(j)$)--($(j)+(-\linkLength,\linkLength/3)$);
\draw ($(j)$)--($(j)+(-\linkLength/2,-\linkLength)$);
\draw [fill=white,draw=blue](v) circle [radius=\radius];
\draw (v) node[blue] {\footnotesize$v$};
\draw [fill=white](i) circle [radius=\radius];
\draw (i) node[] {\footnotesize$i$};
\draw [fill=white](j) circle [radius=\radius];
\draw (j) node[] {\footnotesize$j$};
\draw (-\length/2,-\length/1.3) node[] {\footnotesize$\G$};
\node [draw,cloud,cloud puffs=20,cloud puff arc=120,aspect=2,minimum height=1.6cm,minimum width=3cm,thin,dotted] (myCloud) at (-0.7cm,-0.7cm) {};
\end{tikzpicture}
}
\qquad
\subfloat[Edge splitting]{
\begin{tikzpicture}[scale=\myscale]
\coordinate (v) at (0,\length);
\coordinate (i) at (-\length,0);
\coordinate (j) at (0,-\length/2);
\coordinate (k) at (\length/2,0);
\draw [blue,thick] (i)--(v)--(j);
\draw [blue,thick] (v)--(k);
\draw [] (i)--(j);
\draw [red] ($(i)!0.5!(j)+(\crossLength,\crossLength)$)--($(i)!0.5!(j)+(-\crossLength,-\crossLength)$);
\draw [red] ($(i)!0.5!(j)+(-\crossLength,\crossLength)$)--($(i)!0.5!(j)+(\crossLength,-\crossLength)$);
\draw ($(i)$)--($(i)+(-\linkLength/2,-\linkLength)$);
\draw ($(j)$)--($(j)+(-\linkLength/2,-\linkLength)$);
\draw ($(k)$)--($(k)+(-\linkLength,-\linkLength/3)$);
\draw ($(k)$)--($(k)+(\linkLength/2,-\linkLength)$);
\draw [fill=white,draw=blue](v) circle [radius=\radius];
\draw (v) node[blue] {\footnotesize$v$};
\draw [fill=white](i) circle [radius=\radius];
\draw (i) node[] {\footnotesize$i$};
\draw [fill=white](j) circle [radius=\radius];
\draw (j) node[] {\footnotesize$j$};
\draw [fill=white](k) circle [radius=\radius];
\draw (k) node[] {\footnotesize$k$};
\draw (-\length/2,-\length/1.3) node[] {\footnotesize$\G$};
\node [draw,cloud,cloud puffs=20,cloud puff arc=120,aspect=2,minimum height=1.6cm,minimum width=3cm,thin,dotted] (myCloud) at (-0.5cm,-0.7cm) {};
\end{tikzpicture}}
  \caption{An illustration of the Henneberg construction.}
  \vskip-10pt
  \label{fig_demoHennebergConstruction}
\end{figure}
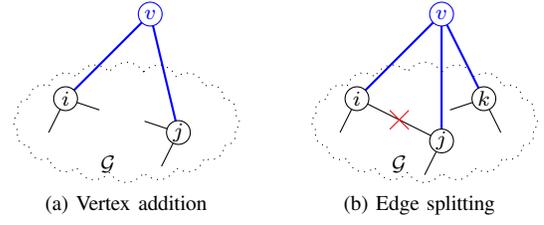

A Henneberg construction starting from an edge connecting two vertices will result in a Laman graph \cite{WhiteleyHenneberg1985,Whiteley2005Pseudotriangulation,Jackson2007NotesRigidity}. The converse is also true. That is if a graph is Laman, then it can be generated by a Henneberg construction \cite[Lem~2]{Whiteley2005Pseudotriangulation}.

The following theorem is the main result in this paper.

\begin{theorem}[\textbf{Generic Bearing Rigidity of Laman Graphs}]\label{thm_HennebergConstruction}
If $\G$ is a Laman graph, then $\G$ is generically bearing rigid in $\R^d$ for any $d\ge2$.
\end{theorem}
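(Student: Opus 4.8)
The plan is to prove the theorem by induction on the Henneberg construction, exploiting the fact (stated in the excerpt) that every Laman graph is generated from a single edge by a finite sequence of vertex-addition and edge-splitting operations. For the base case, the graph consisting of one edge on two vertices has bearing Laplacian $\L=\begin{bmatrix} P_{g_{12}} & -P_{g_{12}} \\ -P_{g_{12}} & P_{g_{12}}\end{bmatrix}$, whose rank equals $\rank(P_{g_{12}})=d-1=2d-d-1$ for every choice of distinct $p_1,p_2$; hence by Lemma~\ref{lemma_NSConditionForBearingRigidity} the single-edge graph is generically bearing rigid. The inductive hypothesis is that the $n$-vertex Laman graph $\G$ obtained just before the last construction step is generically bearing rigid, so there is a configuration $p$ with $\Null(\L)=\myspan\{\one_n\otimes I_d,p\}$; I must then position the new vertex $v$ so that the $(n+1)$-vertex graph $\G'$ satisfies $\Null(\L')=\myspan\{\one_{n+1}\otimes I_d,p'\}$, which by Lemma~\ref{lemma_NSConditionForBearingRigidity} is bearing rigidity.

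The central reformulation I would use is that, since $x'^{T}\L' x'=\tfrac12\sum_{(i,j)\in\E'}(x_i-x_j)^{T}P_{g_{ij}}(x_i-x_j)$ with every summand nonnegative, a vector $x'=[x_1^T,\dots,x_n^T,x_v^T]^T$ lies in $\Null(\L')$ if and only if $P_{g_{ij}}(x_i-x_j)=0$, i.e. $x_i-x_j$ is parallel to $g_{ij}$, for every edge of $\G'$. Because $\myspan\{\one_{n+1}\otimes I_d,p'\}\subseteq\Null(\L')$ is automatic, it suffices to prove the reverse inclusion, namely that any such $x'$ has the form $x_i=b+c\,p_i$ for a common $b\in\R^d$ and $c\in\R$. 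For \emph{vertex addition}, where $v$ is joined to $i,j\in\V$ and all edges of $\G$ are retained, the restriction $x=[x_1^T,\dots,x_n^T]^T$ satisfies all parallelism conditions of $\G$, so $x\in\Null(\L)$ and by the inductive hypothesis $x_i=b+c\,p_i$ for every original vertex. The two remaining conditions $x_v-x_i\parallel(p_i-p_v)$ and $x_v-x_j\parallel(p_j-p_v)$ confine $x_v$ to two lines in $\R^d$ that both pass through $b+c\,p_v$; as long as $v$ is placed so that $p_i,p_j,p_v$ are not collinear, these direction vectors are independent and the two lines meet in exactly one point, forcing $x_v=b+c\,p_v$ and giving rigidity.

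The main obstacle, and the step requiring the key idea, is \emph{edge splitting}, since deleting the edge $(i,j)$ means the restriction $x$ need no longer lie in $\Null(\L)$ and the graph $\G\setminus\{(i,j)\}$ by itself is not rigid, so the inductive hypothesis cannot be applied directly. I would resolve this by placing $v$ \emph{on the line through} $p_i$ and $p_j$ (distinct from every existing node), which makes $g_{vi}$ and $g_{vj}$ both parallel to $g_{ij}$. Then $x_v-x_i$ and $x_v-x_j$ are each parallel to $p_j-p_i$, so their difference $x_i-x_j$ is parallel to $g_{ij}$ — precisely the condition carried by the deleted edge. Consequently $x$ once more satisfies all conditions of $\G$, whence $x\in\Null(\L)$ and $x_i=b+c\,p_i$ for all original vertices. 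The third bearing $x_v-x_k\parallel(p_k-p_v)$, evaluated when $p_k$ is \emph{not} collinear with $p_i,p_j$, then pins down $x_v=b+c\,p_v$ uniquely, completing this case. It is worth highlighting the contrast: vertex addition requires $v$ placed off the line $ij$, whereas edge splitting requires $v$ placed on it.

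To make these genericity requirements simultaneously realizable within the induction, I would invoke Lemma~\ref{lemma_propertyGenericRigidGraphs}: the bearing-rigid configurations of $\G$ form a full-measure set, and the finitely many non-degeneracy conditions (non-collinearity of the relevant triple $p_i,p_j,p_k$, distinctness of the new node from the others) each fail only on a measure-zero set. Their intersection is therefore nonempty, so one may select a base configuration $p$ of $\G$ that is bearing rigid and meets all the required non-degeneracy conditions, after which $v$ is positioned as described in each case. Carrying out both cases closes the induction and establishes that every Laman graph is generically bearing rigid in $\R^d$ for all $d\ge2$.
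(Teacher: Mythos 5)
Your proof is correct, and it follows the paper's skeleton---induction along the Henneberg construction, placing the new vertex $v$ off the line through $p_i,p_j$ for vertex addition and on that line for edge splitting---but your verification of each inductive step takes a genuinely more elementary route. The paper argues at the level of matrix rank: for vertex addition it block-partitions $\L'$, takes the Schur complement with respect to $E=P_{g_{iv}}+P_{g_{jv}}$, and invokes Lemmas \ref{lemma_singularityOfProjSum}--\ref{lemma_rankOfPSDMatSum} (nonsingularity of sums of projectors, positive semi-definiteness of $D-FE^{-1}F^{T}$, and the rank inequality for sums of positive semi-definite matrices) to conclude $\rank(\L')=d(n+1)-d-1$; for edge splitting it introduces an auxiliary graph $\G^*$ obtained by trading the edge $(v,j)$ for $(i,j)$, shows through the quadratic forms that $\L'$ and $\L^*$ have the same null space under the collinear placement, and then reduces to the vertex-addition case. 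You dispense with all of this machinery by working directly with the parallelism characterization of $\Null(\L')$: the restriction of any null vector to the old vertices is shown to satisfy every constraint of $\G$ (for edge splitting, your observation that the collinear placement makes the constraints of $(v,i)$ and $(v,j)$ imply the deleted constraint of $(i,j)$ is exactly the idea underlying the paper's $\G^*$ comparison), after which two line constraints with linearly independent directions, both passing through $b+c\,p_v$, force $x_v=b+c\,p_v$. Your genericity bookkeeping via Lemma \ref{lemma_propertyGenericRigidGraphs} (intersecting the full-measure set of rigid configurations with the full-measure non-collinearity condition on $p_i,p_j,p_k$) matches the paper's footnote. The trade-off: the paper's route yields reusable quantitative lemmas (e.g., the complete spectral description of $D-FE^{-1}F^{T}$ in Lemma \ref{lemma_eigNullOfH}), while your argument is shorter, self-contained, and purely geometric, needing none of Lemmas \ref{lemma_singularityOfProjSum}--\ref{lemma_rankOfPSDMatSum}.
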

\begin{proof}
The proof requires some additional lemmas and is deferred to Section~\ref{section_proofofMainResults}.
\end{proof}

Theorem~\ref{thm_HennebergConstruction} indicates that a network with a Laman graph is bearing rigid for almost all configurations in $\R^d$ for any $d\ge2$.
It also indicates that $2n-3$ edges are sufficient to guarantee the bearing rigidity of a network in an arbitrary dimension since a Laman graph has $2n-3$ edges. For example, every network in Fig.~\ref{fig_bearingRigidNetworkConstructionEachStep} is bearing rigid in $\R^3$ and has merely $2n-3$ edges.
Figure~\ref{fig_bearingRigidNetworkConstructionEachStep} shows all the steps to construct a three-dimensional bearing rigid network. 

\begin{figure}[t]
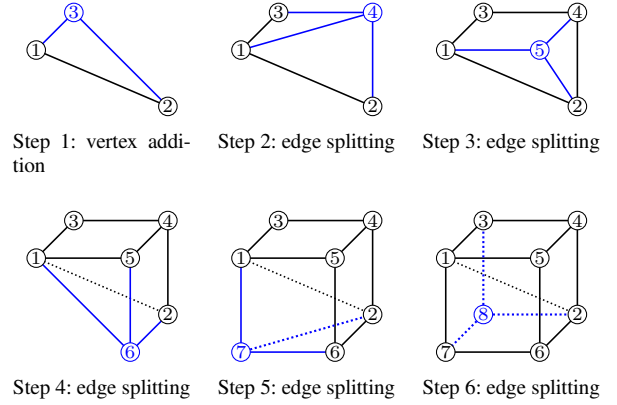

  \centering
  \def\myscale{0.5}
  \include{figure_tikz_constructCube}
  \caption{The procedure to construct a three-dimensional bearing rigid network. The number of edges in this network is equal to $2n-3=13$.}
  \vspace{-10pt}
  \label{fig_bearingRigidNetworkConstructionEachStep}
\end{figure}

The next result shows that adding edges to a Laman graph preserves generic bearing rigidity.

\begin{corollary}\label{corollary_LamanGraphAddEdge}
If $\G$ contains a Laman spanning subgraph, then $\G$ is generically bearing rigid in $\R^d$ for any $d\ge2$.
\end{corollary}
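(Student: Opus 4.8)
The plan is to reduce the corollary directly to Theorem~\ref{thm_HennebergConstruction} by exploiting the edge-additive structure of the bearing Laplacian. Let $\G_s=(\V,\E_s)$ be the Laman spanning subgraph guaranteed by the hypothesis, so that $\E_s\subseteq\E$ and both graphs share the same vertex set $\V$. Since $\G_s$ is Laman, Theorem~\ref{thm_HennebergConstruction} ensures that $\G_s$ is generically bearing rigid in $\R^d$; hence there is a configuration $p$ for which the network $(\G_s,p)$ is bearing rigid, i.e.\ by Lemma~\ref{lemma_NSConditionForBearingRigidity} we have $\Null(\L_s)=\myspan\{\one_n\otimes I_d,p\}$, where $\L_s$ denotes the bearing Laplacian of $(\G_s,p)$. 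I will use this same configuration $p$ to certify the generic bearing rigidity of $\G$.

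The key step is a monotonicity property: adding edges can only shrink the null space of the bearing Laplacian. Writing $\L$ for the bearing Laplacian of $(\G,p)$ and using the quadratic-form identity $x^T\L x=\tfrac12\sum_{(i,j)\in\E}(x_i-x_j)^TP_{g_{ij}}(x_i-x_j)$, I would split the edge set as $\E=\E_s\cup(\E\setminus\E_s)$ to obtain $\L=\L_s+\L_r$, where $\L_r$ is the positive semi-definite matrix collecting the contributions of the extra edges $\E\setminus\E_s$. Because a sum of positive semi-definite matrices annihilates exactly the intersection of the individual null spaces, this gives $\Null(\L)=\Null(\L_s)\cap\Null(\L_r)\subseteq\Null(\L_s)$.

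Combining this containment with the universal inclusion $\myspan\{\one_n\otimes I_d,p\}\subseteq\Null(\L)$, which holds for every network by \cite[Lem~2]{zhao2015NetLocalization}, yields the squeeze
\begin{align*}
\myspan\{\one_n\otimes I_d,p\}\subseteq\Null(\L)\subseteq\Null(\L_s)=\myspan\{\one_n\otimes I_d,p\},
\end{align*}
so that $\Null(\L)=\myspan\{\one_n\otimes I_d,p\}$. By Lemma~\ref{lemma_NSConditionForBearingRigidity}, $(\G,p)$ is then bearing rigid, and therefore $\G$ is generically bearing rigid. The only point requiring care is the null-space intersection claim for sums of positive semi-definite matrices, but this is immediate: if $x^T(\L_s+\L_r)x=0$ with both terms nonnegative, then $x^T\L_s x=x^T\L_r x=0$, hence $\L_s x=\L_r x=0$. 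I therefore expect no serious obstacle; the entire argument is a short consequence of Theorem~\ref{thm_HennebergConstruction} together with the positive semi-definiteness of the bearing Laplacian.
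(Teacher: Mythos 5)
Your proof is correct, and it reaches the corollary by a route that differs from the paper's in its technical execution. The shared core is the reduction to Theorem~\ref{thm_HennebergConstruction}: both arguments certify bearing rigidity of $\G$ at the very configuration $p$ that makes the Laman spanning subgraph $\G_s$ bearing rigid, exploiting that the extra edges contribute positive semi-definitely to the bearing Laplacian. The paper, however, proceeds edge-by-edge: its Lemma~\ref{lemma_addEdgePreserveRigidity} shows that adding a \emph{single} edge preserves generic bearing rigidity, using the rank inequality of Lemma~\ref{lemma_rankOfPSDMatSum} ($\rank(A+B)\ge\max\{\rank(A),\rank(B)\}$ for positive semi-definite $A,B$, whose proof is omitted there) together with the universal upper bound $\rank(\L)\le dn-d-1$, and the corollary follows by iterating that lemma over the extra edges. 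You instead absorb all extra edges in one shot, writing $\L=\L_s+\L_r$, and close the argument on the null-space side: the intersection property $\Null(\L)=\Null(\L_s)\cap\Null(\L_r)\subseteq\Null(\L_s)$, squeezed against the universal inclusion $\myspan\{\one_n\otimes I_d,p\}\subseteq\Null(\L)$, pins down $\Null(\L)$ exactly. The two arguments are dual: your null-space intersection fact is precisely what proves the paper's Lemma~\ref{lemma_rankOfPSDMatSum}, and the rank upper bound and the null-space lower bound are two faces of the same property of bearing Laplacians. What your version buys is self-containment (you prove the key positive semi-definite fact inline, which the paper omits) and the elimination of the induction over edges; what the paper's version buys is a standalone, independently reusable single-edge lemma (Lemma~\ref{lemma_addEdgePreserveRigidity}).
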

\begin{proof}
The proof requires some additional lemmas and is deferred to Section~\ref{section_proofofMainResults}.
\end{proof}

While Theorem~\ref{thm_HennebergConstruction} indicates that Laman graphs are generically bearing rigid, a natural question that follows is whether generically bearing rigid graphs are also Laman. The answer may be negative. A counterexample is given in Fig.~\ref{fig_specialRigidExample}. The cyclic graph in this example is generically bearing rigid in $\R^3$ because the configuration in Fig.~\ref{fig_specialRigidExample}(b) makes the network bearing rigid. However, this cyclic graph is \emph{not} Laman because the edge number of the graph is $4$, which is less than $2n-3=5$ (a Laman graph must have $2n-3$ edges).
This example also demonstrates that $2n-3$ is not the minimum number of edges required to ensure bearing rigidity. A discussion on this example is given in the conclusion section.

\begin{figure}[t]
  \centering
  \def\myscale{0.45}
  \def\length{2}
\def\radius{8pt}

\subfloat[]{
\begin{tikzpicture}[scale=\myscale]
\coordinate (origin) at (0,0);
\coordinate (x) at (-1.1*\length,-1.1*\length);
\coordinate (y) at (2.1*\length,0);
\coordinate (z) at (0,1.6*\length);
\draw [->,very thin,densely dashed] (origin)--(x) node[left]{\scriptsize$x$};
\draw [->,very thin,densely dashed] (origin)--(y) node[right]{\scriptsize$y$};
\draw [->,very thin,densely dashed] (origin)--(z) node[right]{\scriptsize$z$};
\coordinate (1) at (-0.7*\length,-0.7*\length);
\coordinate (2) at (0.8*\length,-0.7*\length);
\coordinate (3) at (1.5*\length,0);
\coordinate (4) at (0,0);
\draw [semithick] (1)--(2)--(3)--(4)--cycle;
\draw [fill=white](1) circle [radius=\radius];
\draw (1) node[] {\scriptsize$1$};
\draw [fill=white](2) circle [radius=\radius];
\draw (2) node[] {\scriptsize$2$};
\draw [fill=white](3) circle [radius=\radius];
\draw (3) node[] {\scriptsize$3$};
\draw [fill=white](4) circle [radius=\radius];
\draw (4) node[] {\scriptsize$4$};
\end{tikzpicture}
}\quad
\subfloat[]{
\begin{tikzpicture}[scale=\myscale]
\coordinate (origin) at (0,0);
\coordinate (x) at (-1.1*\length,-1.1*\length);
\coordinate (y) at (2.1*\length,0);
\coordinate (z) at (0,1.6*\length);
\draw [->,very thin,densely dashed] (origin)--(x) node[left]{\scriptsize$x$};
\draw [->,very thin,densely dashed] (origin)--(y) node[right]{\scriptsize$y$};
\draw [->,very thin,densely dashed] (origin)--(z) node[right]{\scriptsize$z$};
\coordinate (1) at (-0.7*\length,-0.7*\length);
\coordinate (2) at (0.8*\length,-0.7*\length);
\coordinate (3) at (1.5*\length,0);
\coordinate (4) at (0,\length);
\draw [semithick] (1)--(2)--(3)--(4)--cycle;
\draw [fill=white](1) circle [radius=\radius];
\draw (1) node[] {\scriptsize$1$};
\draw [fill=white](2) circle [radius=\radius];
\draw (2) node[] {\scriptsize$2$};
\draw [fill=white](3) circle [radius=\radius];
\draw (3) node[] {\scriptsize$3$};
\draw [fill=white](4) circle [radius=\radius];
\draw (4) node[] {\scriptsize$4$};
\end{tikzpicture}
}
  \caption{The configuration (a) is in the $x$--$y$ plane and the network is not bearing rigid. The configuration (b) is three-dimensional and the network is bearing rigid. It can be verified that $\rank(\L)=dn-d-1$ for the configuration in (b).}
  \vskip-10pt
  \label{fig_specialRigidExample}
\end{figure}
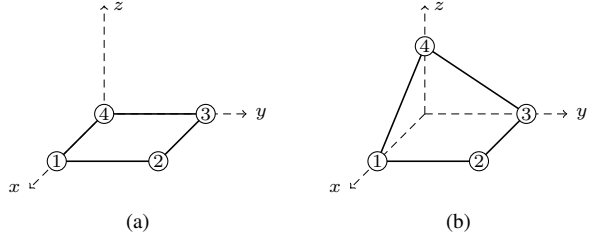

As indicated by the example in Fig.~\ref{fig_specialRigidExample}, not all generically bearing rigid graphs are Laman or contain Laman spanning subgraphs. However, if we restrict to $\R^2$, then Laman is both necessary and sufficient for generic bearing rigidity.

\begin{theorem}
A graph $\G$ is generically bearing rigid in $\R^2$ if and only if the graph contains a Laman spanning subgraph.
\end{theorem}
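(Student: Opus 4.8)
The plan is to reduce generic bearing rigidity in $\R^2$ to generic distance rigidity in $\R^2$ and then invoke Laman's Theorem. The ``if'' direction is already contained in Corollary~\ref{corollary_LamanGraphAddEdge} specialized to $d=2$, so the real content is the ``only if'' direction. I would in fact prove the stronger statement that, for \emph{every} admissible configuration $p$ in $\R^2$ (with $p_i\ne p_j$), the network $(\G,p)$ is bearing rigid if and only if it is infinitesimally distance rigid; the theorem then follows by applying ``there exists a rigid configuration'' to both sides and using the classical characterization of generic distance rigidity.

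The key observation, and the reason the equivalence is special to the plane, is the dimensional coincidence that at $d=2$ one has $dn-d-1=2n-3=dn-\tfrac{d(d+1)}2$: the maximal rank of the bearing Laplacian $\L$ equals the maximal rank $2n-3$ of the distance rigidity matrix $R$, whose rows encode the constraints $g_{ij}^T(\delta p_i-\delta p_j)=0$. To connect the two I would note, from the quadratic form in the preliminaries, that $\Null(\L)=\{\delta p: P_{g_{ij}}(\delta p_i-\delta p_j)=0,\ \forall(i,j)\in\E\}$. Using the $90^\circ$ rotation $J=\bigl[\begin{smallmatrix}0&-1\\1&0\end{smallmatrix}\bigr]$ and writing $g_{ij}^\perp=Jg_{ij}$, the planar identity $P_{g_{ij}}=g_{ij}^\perp(g_{ij}^\perp)^T$ turns the bearing constraint into $(g_{ij}^\perp)^T(\delta p_i-\delta p_j)=0$. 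Setting $\delta q=(I_n\otimes J)\delta p$ and using $g_{ij}^TJ=-(g_{ij}^\perp)^T$ shows that $\delta p\in\Null(\L)$ if and only if $\delta q\in\Null(R)$. Since $I_n\otimes J$ is invertible, this is a linear isomorphism $\Null(\L)\to\Null(R)$, so $\rank(\L)=\rank(R)$ at every admissible configuration.

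With the rank identity in hand the argument is short. By Lemma~\ref{lemma_NSConditionForBearingRigidity}, $(\G,p)$ is bearing rigid iff $\rank(\L)=2n-3$, which by the identity is equivalent to $\rank(R)=2n-3$, i.e. $(\G,p)$ is infinitesimally distance rigid in $\R^2$. Hence $\G$ is generically bearing rigid in $\R^2$ (some configuration is bearing rigid) exactly when $\G$ is generically distance rigid in $\R^2$ (some configuration is infinitesimally distance rigid). By Laman's Theorem in matroidal form, $\G$ is generically distance rigid in $\R^2$ if and only if its generic $2$-dimensional rigidity matroid has rank $2n-3$, equivalently if and only if $\G$ contains a spanning subgraph with $2n-3$ edges meeting the Laman count, that is, a Laman spanning subgraph. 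Chaining these equivalences gives the theorem, and the ``if'' direction also re-derives Corollary~\ref{corollary_LamanGraphAddEdge} in $\R^2$.

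The main difficulty I anticipate is not the linear algebra but stating the reduction precisely. One must check that $\rank(\L)=\rank(R)$ holds at \emph{every} admissible $p$, so that bearing and distance rigidity coincide configuration by configuration and the two genericity sets cannot disagree; and one must invoke the correct form of Laman's Theorem, namely the passage from minimally rigid (isostatic) graphs to graphs that merely \emph{contain} a spanning Laman subgraph, which is exactly the statement that the generic rigidity matroid attains full rank $2n-3$. Finally, it is worth stressing why the result is confined to the plane: for $d\ge3$ one has $dn-d-1>dn-\tfrac{d(d+1)}2$, so bearing rigidity becomes strictly weaker than distance rigidity and the isomorphism above has no analogue, which is consistent with the counterexample in Fig.~\ref{fig_specialRigidExample}.
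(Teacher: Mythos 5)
Your proposal is correct and takes essentially the same route as the paper: both establish sufficiency via Theorem~\ref{thm_HennebergConstruction} and Corollary~\ref{corollary_LamanGraphAddEdge}, and both prove necessity by converting planar bearing rigidity into planar infinitesimal distance rigidity and then applying the necessity half of Laman's theorem. The only difference is where the citations sit: you prove the bearing/distance equivalence inline via the rotation isomorphism $\delta p \mapsto (I_n \otimes J)\,\delta p$ between $\Null(\L)$ and the null space of the distance rigidity matrix (the paper cites this equivalence from prior work), whereas you cite the matroidal form of Laman's theorem at the point where the paper explicitly extracts $2n-3$ linearly independent rows of the rigidity matrix to exhibit the Laman spanning subgraph.
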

\begin{proof}
The sufficiency follows from Theorem~\ref{thm_HennebergConstruction} and Corollary~\ref{corollary_LamanGraphAddEdge}. To prove necessity, we need some notions in the distance rigidity theory \cite{Jackson2007NotesRigidity,ConnellyBook} which are omitted here due to space limitations. Since $\G$ is generically bearing rigid in $\R^2$, there exists $p$ such that $(\G,p)$ is infinitesimally bearing rigid in $\R^2$. Since infinitesimal bearing rigidity and infinitesimal distance rigidity imply each other in $\R^2$ \cite[Thm~8]{zhao2014TACBearing}, we know $(\G,p)$ is infinitesimally distance rigid in $\R^2$. Now we consider two cases. In the first case where $\G$ has exactly $2n-3$ edges, the distance rigidity matrix of $(\G,p)$ has full row rank and consequently the graph is Laman by \cite[Lem~2.3]{Jackson2007NotesRigidity} (note that \cite[Lem~2.3]{Jackson2007NotesRigidity} is the necessary part of Laman's Theorem). In the second case where $\G$ has more than $2n-3$ edges, the distance rigidity matrix has its rank equal to $2n-3$ though it is not of full row rank any more. There must exist $2n-3$ linearly independent rows in the distance rigidity matrix. These $2n-3$ rows correspond to a spanning subgraph with $2n-3$ edges. Since the distance rigidity matrix of this spanning subgraph is of full row rank, the subgraph is Laman by \cite[Lem~2.3]{Jackson2007NotesRigidity}.
\end{proof}

\section{Proof of the Main Results}\label{section_proofofMainResults}

In this section, we prove Theorem~\ref{thm_HennebergConstruction} and Corollary~\ref{corollary_LamanGraphAddEdge}. To do that, we need first prove some lemmas.

\begin{lemma}\label{lemma_singularityOfProjSum}
For any nonzero vectors $x_1,\dots,x_m\in\R^d$ where $m\ge2,d\ge2$, the matrix $E=\sum_{i=1}^m P_{x_i}\in\R^{d\times d}$ is nonsingular if and only if at least two of $x_1,\dots,x_m$ are not collinear.
\end{lemma}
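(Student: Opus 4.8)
The plan is to exploit the fact that $E$ is a sum of positive semi-definite projection matrices, so that its nonsingularity reduces to positive definiteness, which can be tested through the associated quadratic form. First I would write, for any $y \in \R^d$,
\begin{align*}
y^T E y = \sum_{i=1}^m y^T P_{x_i} y = \sum_{i=1}^m \|P_{x_i} y\|^2,
\end{align*}
where the last equality uses $P_{x_i}^T = P_{x_i}$ and $P_{x_i}^2 = P_{x_i}$. Since every term is nonnegative, $E$ is positive semi-definite, and therefore $E$ is nonsingular if and only if it is positive definite, i.e. if and only if $y^T E y > 0$ for all nonzero $y$.

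The next step is to characterize exactly when this quadratic form vanishes. From the displayed identity, $y^T E y = 0$ holds if and only if $P_{x_i} y = 0$ for every $i$. Invoking the stated fact that $\Null(P_{x_i}) = \myspan\{x_i\}$, this is in turn equivalent to $y$ being parallel to each of $x_1, \dots, x_m$. Hence $E$ is singular precisely when some nonzero vector $y$ is simultaneously parallel to all of the $x_i$.

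Both implications then follow from this characterization. If $x_1, \dots, x_m$ are all collinear, they share a common direction $v$; taking $y = v \ne 0$ gives $y^T E y = 0$, so $E$ is singular. Conversely, if at least two of them, say $x_1$ and $x_2$, are not collinear, then any nonzero $y$ annihilating the form would be parallel to both $x_1$ and $x_2$, forcing these two vectors to be collinear, a contradiction. Thus no nonzero $y$ makes the form vanish, $E$ is positive definite, and consequently nonsingular. Combining the two cases yields the claimed equivalence.

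I do not expect a serious obstacle here, since the argument is essentially careful linear-algebraic bookkeeping around the projection identities. The only point deserving attention is the logical translation between ``at least two of the $x_i$ are not collinear'' and ``no single direction is common to all $x_i$''; I would make this dichotomy explicit before applying it, so that each implication lines up cleanly with the null-space condition $\Null(P_{x_i}) = \myspan\{x_i\}$.
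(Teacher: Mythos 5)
Your proof is correct and follows essentially the same route as the paper: the paper writes $E=F^TF$ with $F=[P_{x_1},\dots,P_{x_m}]^T$ and uses $Ez=0\Leftrightarrow z^TEz=0\Leftrightarrow P_{x_i}z=0$ for all $i$, which is exactly your expansion $y^TEy=\sum_{i=1}^m\|P_{x_i}y\|^2$ in slightly different packaging. Both arguments then conclude via $\Null(P_{x_i})=\myspan\{x_i\}$, so there is nothing to add.
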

\begin{proof}
Let $F=[P_{x_1},\dots,P_{x_m}]^T\in\R^{dm\times d}$. Since $P_{x_i}^2=P_{x_i}$, we know $E=F^TF $. Then $Ez=0\Leftrightarrow z^TEz=0\Leftrightarrow Fz=0\Leftrightarrow P_{x_1}z=\dots=P_{x_m}z=0$. Since $P_{x_i}z=0$ if and only if $z$ is collinear with $x_i$, $E$ is singular if and only if $x_1,\dots,x_m$ are all collinear.
\end{proof}

\begin{lemma}\label{lemma_eigNullOfH}
For any nonzero vectors $x_1,\dots,x_m\in\R^d$, at least two of which are not collinear, the $dm\times dm$ matrix
$$H=D-FE^{-1}F^T ,$$
where
\begin{align*}
\small{
D=\left[
  \begin{array}{ccc}
    P_{x_1} & \dots & 0\\
    \vdots &  \ddots & \vdots\\
    0 & \dots & P_{x_m}
  \end{array}
\right],
F=\left[
  \begin{array}{c}
    P_{x_1}\\
    \vdots \\
    P_{x_m}\\
  \end{array}
\right],
E=\sum_{i=1}^m P_{x_i},}
\end{align*}
is a symmetric positive semi-definite matrix with $m+d$ eigenvalues equal to zero and $dm-m-d$ eigenvalues equal to one.
\end{lemma}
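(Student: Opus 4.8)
The plan is to recognize $H$ as the Schur complement of $E$ in the block matrix $\left[\begin{smallmatrix} D & F \\ F^{T} & E\end{smallmatrix}\right]$ and, more usefully, to prove directly that $H$ is an \emph{orthogonal projection}, i.e. symmetric and idempotent. Symmetry of $H$ is immediate, since $D$, $E$, and hence $E^{-1}$ are symmetric, so that $FE^{-1}F^{T}$ is symmetric. Once idempotence $H^{2}=H$ is established, the eigenvalues of $H$ can only be $0$ or $1$, which forces $H$ to be positive semi-definite; it then remains only to count the multiplicities through the trace, because for an orthogonal projection the number of unit eigenvalues equals $\tr(H)=\rank(H)$.

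The key structural observations I would set up are the following. First, since each $P_{x_i}$ is a symmetric idempotent with $\tr(P_{x_i})=d-1$, the block-diagonal matrix $D$ is itself a symmetric idempotent with $\tr(D)=m(d-1)=dm-m$. Second, writing $Q=\one_m\otimes I_d\in\R^{dm\times d}$ for the vertical stack of $m$ copies of $I_d$, one has the factorizations $F=DQ$ and $E=Q^{T}DQ=F^{T}F$; the hypothesis that at least two of the $x_i$ are not collinear guarantees, via Lemma~\ref{lemma_singularityOfProjSum}, that $E$ is nonsingular, so $E^{-1}$ and $H$ are well defined. Third, I would show that $R:=FE^{-1}F^{T}=DQE^{-1}Q^{T}D$ is also an orthogonal projection: it is symmetric, and $R^{2}=DQE^{-1}(Q^{T}DQ)E^{-1}Q^{T}D=DQE^{-1}EE^{-1}Q^{T}D=R$ using $D^{2}=D$, while its trace is $\tr(R)=\tr(E^{-1}Q^{T}DQ)=\tr(E^{-1}E)=\tr(I_d)=d$, so $\rank(R)=d$.

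With these facts in hand the conclusion follows quickly. Since $R=D(\cdots)$ and $D^{2}=D$, one checks $DR=RD=R$; hence $H^{2}=(D-R)^{2}=D-DR-RD+R=D-R=H$, so $H$ is a symmetric idempotent and therefore positive semi-definite with spectrum contained in $\{0,1\}$. The number of unit eigenvalues is $\rank(H)=\tr(H)=\tr(D)-\tr(R)=(dm-m)-d=dm-m-d$, and the remaining $dm-(dm-m-d)=m+d$ eigenvalues are zero, exactly as claimed.

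The main obstacle, and really the only nontrivial step, is recognizing that the range of $FE^{-1}F^{T}$ lies inside the range of $D$ and that this forces $H$ to be idempotent; the cleanest way to capture it is the identity $DR=RD=R$, which rests on the two factorizations $F=DQ$, $E=F^{T}F$ and on $D$ being a projection. Everything else (symmetry, positive semi-definiteness, and the two eigenvalue counts) reduces to routine trace computations once this projection structure has been exposed.
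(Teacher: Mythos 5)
Your proof is correct, but it takes a genuinely different route from the paper. The paper's proof is constructive: it exhibits an explicit $dm\times(m+d)$ matrix $N$ (whose block columns are built from the $x_i$ and the projectors $P_{x_i}$), verifies $HN=0$ and that $N$ has full column rank, and then shows that every $z\perp\Range(N)$ satisfies $Hz=z$; this simultaneously pins down $\Null(H)=\Range(N)$ and shows the orthogonal complement is the eigenspace for eigenvalue $1$, giving the multiplicities $m+d$ and $dm-m-d$. You instead prove that $H$ is itself an orthogonal projection: using $Q=\one_m\otimes I_d$, the factorizations $F=DQ$, $E=Q^TDQ=F^TF$, and idempotence of $D$ yield $DR=RD=R$ for $R=FE^{-1}F^T$, hence $H^2=H$, and the multiplicities follow from $\tr(H)=\tr(D)-\tr(R)=(dm-m)-d$ together with the fact that rank equals trace for projections. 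Both arguments rely on Lemma~\ref{lemma_singularityOfProjSum} for the invertibility of $E$. Your approach is shorter and purely algebraic, and it reveals the cleaner structural fact (not stated in the paper) that $H$ is an orthogonal projection; the paper's approach costs more bookkeeping but buys an explicit basis for $\Null(H)$, i.e., a geometric description of exactly which perturbations $H$ annihilates, which is the kind of information one might want when interpreting the null space in rigidity terms.
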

\begin{proof}
Let
\begin{align}\label{eq_orthogonalBasisNullH}
N=
\left[
  \begin{array}{ccccc}
    x_1 & 0 & \dots & 0           & P_{x_1}\\
    0 & x_2 & \dots & 0           & P_{x_2}\\
    \vdots & \vdots & \ddots & 0  & \vdots\\
    0 & 0 & \dots & x_m              & P_{x_m}\\
  \end{array}
\right]\in\R^{dm\times(m+d)}.
\end{align}
Since $P_{x_i}x_i=0$ and $P_{x_i}P_{x_i}=P_{x_i}$, it can be verified that $HN=0$ and hence $\Range(N)\subseteq\Null(H)$.
Since $x_i$'s are not collinear, the last $d$ columns of $N$ are of full column rank and hence $N$ is of full column rank.

Let $z=[z_1^T ,\dots,z_2^T ]^T \in\R^{dm}$ be a vector orthogonal to $\Range(N)$, which means $z^TN=0$ and equivalently
\begin{align}\label{eq_perpRangeNCondition}
x_1^Tz_1=0,\quad\dots,\quad x_m^Tz_m=0,\quad \sum_{i=1}^m P_{x_i}z_i=0.
\end{align}
It follows from $x_i^Tz_i=0$ that $P_{x_i}z_i=z_i$. Then,
\begin{align*}
Hz
=\left[
   \begin{array}{c}
     P_{x_1}z_1 \\
     \vdots \\
     P_{x_2}z_2 \\
   \end{array}
 \right]-FE^{-1}\sum_{i=1}^m P_{x_i}z_i
=\left[
   \begin{array}{c}
     z_1 \\
     \vdots \\
     z_2 \\
   \end{array}
 \right]-0=z.
\end{align*}
The above equation implies two conclusions. First, every $z\perp\Range(N)$ is not in the null space of $H$ and hence $\Range(N)=\Null(H)$.
Second, every $z\perp\Range(N)$ is an eigenvector of $H$ and the corresponding eigenvalue is 1.
\end{proof}

\begin{lemma}\label{lemma_rankOfPSDMatSum}
If $A,B\in\R^{m\times m}$ are positive semi-definite, $\rank(A+B)\ge\max\{\rank(A),\rank(B)\}$.
\end{lemma}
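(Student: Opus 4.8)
The plan is to work through null spaces rather than ranks directly, exploiting the single feature that distinguishes positive semi-definite matrices from arbitrary ones: for a PSD matrix $A$, the quadratic form $x^T A x$ vanishes \emph{only} on the null space. Concretely, writing $A=C^T C$ (a Cholesky-type factorization, which exists precisely because $A$ is symmetric PSD), we have $x^T A x=\|Cx\|^2$, so $x^T A x=0$ implies $Cx=0$ and hence $Ax=C^T C x=0$. Thus for PSD matrices, $x^T A x=0\Leftrightarrow Ax=0$. This equivalence is the engine of the whole argument.

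First I would establish the key identity $\Null(A+B)=\Null(A)\cap\Null(B)$. The inclusion $\supseteq$ is immediate. For $\subseteq$, suppose $(A+B)x=0$. Then $x^T(A+B)x=0$, i.e. $x^T A x + x^T B x=0$. Since both $A$ and $B$ are PSD, the two summands are individually nonnegative, so each must be zero: $x^T A x=0$ and $x^T B x=0$. By the equivalence above, this forces $Ax=0$ and $Bx=0$, so $x\in\Null(A)\cap\Null(B)$. This step is where positive semi-definiteness is genuinely used — for indefinite or general matrices the two quadratic forms could cancel without each vanishing, and the identity (and the lemma) would fail.

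From the identity, $\Null(A+B)=\Null(A)\cap\Null(B)\subseteq\Null(A)$, hence $\dim\Null(A+B)\le\dim\Null(A)$. Applying the rank--nullity theorem to both $A+B$ and $A$ (all matrices are $m\times m$) gives $\rank(A+B)=m-\dim\Null(A+B)\ge m-\dim\Null(A)=\rank(A)$. The symmetric argument using $\Null(A+B)\subseteq\Null(B)$ yields $\rank(A+B)\ge\rank(B)$, and combining the two inequalities gives $\rank(A+B)\ge\max\{\rank(A),\rank(B)\}$, as claimed.

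There is no real obstacle here: the entire content is the quadratic-form splitting, after which rank--nullity finishes mechanically. The only thing to be careful about is to invoke PSD-ness exactly at the moment the two forms are separated, since that is the single hypothesis preventing cancellation; everything else is standard linear algebra. (One could equivalently phrase the argument on ranges, noting $\Range(A)=\Null(A)^\perp$ for symmetric matrices, but the null-space formulation is the most direct.)
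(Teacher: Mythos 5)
Your proof is correct and complete. There is nothing in the paper to compare it against: the authors state only that ``the proof is straightforward and omitted due to space limitation,'' so your argument actually supplies what the paper leaves out. The route you take --- establishing $x^T A x = 0 \Leftrightarrow Ax = 0$ for symmetric PSD matrices via the factorization $A = C^T C$, deducing $\Null(A+B) = \Null(A)\cap\Null(B)$ from the non-cancellation of the two nonnegative quadratic forms, and finishing with rank--nullity --- is the standard one and is almost certainly the argument the authors had in mind; you are also right to flag that symmetry is needed for the factorization (and indeed the lemma fails for non-symmetric matrices with nonnegative quadratic forms, e.g.\ a nonsingular skew-symmetric $A$ and $B=-A$), which is consistent with the paper since every PSD matrix it applies the lemma to (the bearing Laplacians and $D - FE^{-1}F^T$) is symmetric.
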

\begin{proof}
The proof is straightforward and omitted due to space limitation.
\end{proof}

Now we prove the main result of this paper.

\begin{proof}[\textbf{Proof of Theorem~\ref{thm_HennebergConstruction}}]
Every Laman graph can be constructed iteratively by the Henneberg construction starting from a simple graph of two vertices and one edge. This simple graph is generically bearing rigid in $\R^d$ for any $d\ge2$. The rest of the proof is to show that if $\G$ is generically bearing rigid in $\R^d$ then $\G'$ remains generically bearing rigid in $\R^d$ after either operation in the Henneberg construction. Then the theorem can be proved by induction.

Since $\G$ is generically bearing rigid, there exists a configuration $p$ such that $(\G,p)$ is bearing rigid. Let $p_v\in\R^d$ be the position of the new node and $p'=[p^T ,p_v^T ]^T \in\R^{d(n+1)}$.  Let $\L$ and $\L'$ be the bearing Laplacian matrices of $(\G,p)$ and $(\G',p')$, respectively. Then, $\rank(\L)=dn-d-1$ by Lemma~\ref{lemma_NSConditionForBearingRigidity}. Our objective is to prove there exists $p_v$ such that $\rank(\L')=d(n+1)-d-1$.

\emph{Case 1: Vertex Addition.}  We first consider the case where $\G'$ is obtained from $\G$ by the vertex-addition operation in Definition~\ref{def_constructionmethod}. Partition $\L$ into
\begin{align*}
\L=\left[
     \begin{array}{cc}
       \L_{11} & \L_{12} \\
       \L_{21} & \L_{22} \\
     \end{array}
   \right],
\end{align*}
where $\L_{22}\in\R^{2d\times 2d}$ corresponds to nodes $i,j$. Then $\L'$ can be expressed as
\begin{align*}
\L'=\left[
     \begin{array}{cc:c}
       \L_{11} & \L_{12} & 0\\
       \L_{21} & \L_{22}+D & F\\
       \hdashline
       0       & F^T        & E \\
     \end{array}
   \right],
\end{align*}
where
\begin{align*}
D&=\left[
     \begin{array}{cc}
       P_{g_{iv}} & 0\\
       0 & P_{g_{jv}}\\
     \end{array}
   \right]\in\R^{2d\times 2d},\\
F&=\left[
     \begin{array}{cc}
       -P_{g_{iv}}\\
       -P_{g_{jv}}\\
     \end{array}
   \right]\in\R^{2d\times d},\\
E&=P_{g_{iv}}+P_{g_{jv}}\in\R^{d\times d},
\end{align*}
and $0$ denotes zero matrices with appropriate dimensions.

In order to show that $\G'$ is generically bearing rigid, we only need to find at least one configuration $p'$ such that $(\G',p')$ is bearing rigid. In this direction, consider the case where $p_i,p_j,p_v$ are not collinear. Then $g_{iv}$ and $g_{jv}$ are not collinear and hence $E$ is nonsingular by Lemma~\ref{lemma_singularityOfProjSum}.
By the properties of $2\times2$ block matrices, we have
\begin{align}
&\rank(\L')\nonumber\\
&=\rank(E)\nonumber\\
&\quad+\rank
\left(
\L+\left[
     \begin{array}{ccc}
       0 & 0\\
       0 & D\\
     \end{array}
   \right]
-\left[
     \begin{array}{cc}
       0\\
       F\\
     \end{array}
   \right]E^{-1}\left[
     \begin{array}{cc}
       0 & F^T
     \end{array}
   \right]
\right)\nonumber\\
&=\rank(E)+\rank
\left(
\L+\left[
     \begin{array}{ccc}
       0 & 0\\
       0 & D-FE^{-1}F^T \\
     \end{array}
   \right]
\right)\nonumber\\
&:=\rank(E)+\rank(\L+\bar{H}).
\end{align}
According to Lemma~\ref{lemma_eigNullOfH}, we know $D-FE^{-1}F^T $ is positive semi-definite and so is $\bar{H}$. Then, $\rank(\L+\bar{H})\ge\rank(\L)=dn-d-1$ by Lemma~\ref{lemma_rankOfPSDMatSum}. Since $\rank(E)=d$, we know $\rank(\L')=\rank(E)+\rank(\L+\bar{H})\ge d+dn-d-1=d(n+1)-d-1$. Since $\rank(\L')\le d(n+1)-d-1$, we obtain $\rank(\L')= d(n+1)-d-1$. Therefore, $(\G',p')$ is bearing rigid and hence $\G'$ is generically bearing rigid.

\emph{Case 2: Edge Splitting.} We now consider the case where $\G'$ is obtained from $\G$ by the edge-splitting operation in Definition~\ref{def_constructionmethod}.
Since $\G$ is generically bearing rigid, there always exists a configuration where $p_i$, $p_j$, and $p_k$ are not collinear such that $(\G,p)$ is bearing rigid.\footnote{If $p_i$, $p_j$, and $p_k$ are non-collinear but $(\G,p)$ is not bearing rigid, we can always apply a sufficiently small perturbation on the configuration to make the network bearing rigid according to Lemma~\ref{lemma_propertyGenericRigidGraphs}. Since the perturbation may be sufficiently small, $p_i$, $p_j$, and $p_k$ can be preserved to be non-collinear.} By placing the new node $p_v$ in the middle of $p_i$ and $p_j$, we obtain $(\G',p')$ (see Fig.~\ref{fig_edgeSplittingCollinear}(a)). By deleting $(v,j)$ from $\G'$ and adding $(i,j)$ to $\G'$, we obtain another network $(\G^*,p')$ where $\G^*=(\V,\E^*)$ and $\E^*=\E'\cup\{(i,j)\}\setminus\{(v,j)\}$ (see Fig.~\ref{fig_edgeSplittingCollinear}(b)).

\begin{figure}[t]
  \centering
  \def\myscale{0.45}
  \def\length{2}
\def\radius{10pt}

\subfloat[Network $(\G',p')$]{
\begin{tikzpicture}[scale=\myscale]
\coordinate (v) at (0,0);
\coordinate (i) at (-\length,\length);
\coordinate (j) at (\length,-\length);
\coordinate (k) at (-2*\length,-\length);
\draw [] (v)--(i);
\draw [] (v)--(j);
\draw [] (v)--(k);
\draw [fill=white](v) circle [radius=\radius];
\draw (v) node[] {\footnotesize$v$};
\draw [fill=white](i) circle [radius=\radius];
\draw (i) node[] {\footnotesize$i$};
\draw [fill=white](j) circle [radius=\radius];
\draw (j) node[] {\footnotesize$j$};
\draw [fill=white](k) circle [radius=\radius];
\draw (k) node[] {\footnotesize$k$};
\end{tikzpicture}
}
\qquad
\subfloat[Network $(\G^*,p')$]{
\begin{tikzpicture}[scale=\myscale]
\coordinate (v) at (0,0);
\coordinate (i) at (-\length,\length);
\coordinate (j) at (\length,-\length);
\coordinate (k) at (-2*\length,-\length);
\draw [] (v)--(i);
\draw [] (v)--(k);
\draw ($(i)+(\radius,0)$)--($(j)+(\radius,0)$);
\draw [fill=white](v) circle [radius=\radius];
\draw (v) node[] {\footnotesize$v$};
\draw [fill=white](i) circle [radius=\radius];
\draw (i) node[] {\footnotesize$i$};
\draw [fill=white](j) circle [radius=\radius];
\draw (j) node[] {\footnotesize$j$};
\draw [fill=white](k) circle [radius=\radius];
\draw (k) node[] {\footnotesize$k$};
\end{tikzpicture}
}
  \caption{An illustration of the networks $(\G',p')$ and $(\G^*,p')$ in case 2. Network (a) has $(v,i)$, $(v,j)$, and $(v,k)$. Network (b) has $(v,i)$, $(i,j)$, and $(v,k)$.}
  \vskip-10pt
  \label{fig_edgeSplittingCollinear}
\end{figure}
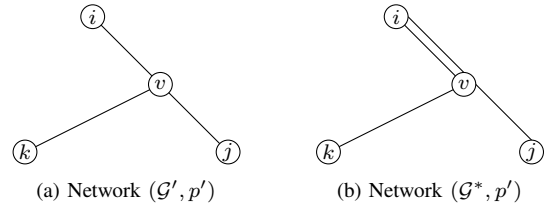
We next show that $(\G',p')$ is bearing rigid if and only if $(\G^*,p')$ is bearing rigid. To do that, let $\L'$ and $\L^*$ be the bearing Laplacian matrices of $(\G',p')$ and $(\G^*,p')$, respectively. For any $x=[x_1^T ,\dots,x_n^T ]^T \in\R^{d(n+1)}$, we have
\begin{align}\label{eq_xL*x}
x^T \L^*x=x^T \L'x&-(x_v-x_j)^TP_{g_{vj}}(x_v-x_j) \nonumber\\
&+(x_i-x_j)^TP_{g_{ij}}(x_i-x_j).
\end{align}
Note that $x^T \L'x=0$ if and only if
\begin{align*}
\sum_{(a,b)\in\E'} (x_a-x_b)^TP_{g_{ab}}(x_a-x_b)=0,
\end{align*}
which implies $(x_a-x_b)^TP_{g_{ab}}(x_a-x_b)=0$ for all $(a,b)\in\E$.
As a result,
\begin{align}\label{eq_xvxjcollinear}
    (x_v-x_i)^TP_{g_{vi}}(x_v-x_i)&=0, \nonumber\\
    (x_v-x_j)^TP_{g_{vj}}(x_v-x_j)&=0.
\end{align}
Since $p_i,p_v,p_j$ are collinear, we know that $g_{vi},g_{vj},g_{ij}$ are parallel to each other. It then follows from the above two equations that $x_v,x_i,x_j$ are collinear. As a result, we have
\begin{align}\label{eq_xixjcollinear}
    (x_i-x_j)^TP_{g_{ij}}(x_i-x_j)=0.
\end{align}
Substituting \eqref{eq_xvxjcollinear} and \eqref{eq_xixjcollinear} into \eqref{eq_xL*x} gives $x^T \L^*x=0$. To summarize, $x^T \L'x=0\Rightarrow x^T \L^*x=0$. Similarly, it can be proved that $x^T \L^*x=0\Rightarrow x^T \L'x=0$ and consequently $x^T \L^*x=0\Leftrightarrow x^T \L'x=0$. As a result, $\L^*$ and $\L'$ have the same rank and consequently $(\G',p')$ is bearing rigid if and only if $(\G^*,p')$ is bearing rigid.

Note that $(\G^*,p')$ is obtained by the vertex addition operation from $(\G,p)$. As we have proved in Case 1, $(\G^*,p')$ is bearing rigid. Therefore, $(\G',p')$ is bearing rigid and consequently $\G'$ is generically bearing rigid.
\end{proof}
\begin{remark}
We would like to point out that there is another simple proof for Theorem~\ref{thm_HennebergConstruction} as outlined below. If a graph $\G$ is Laman, it follows from Laman's Theorem that there exists a configuration $p$ in $\R^2$ such that $(\G,p)$ is infinitesimally distance rigid \cite{Laman1970,WhiteleyHenneberg1985,Whiteley2005Pseudotriangulation,Anderson2008CSM,Jackson2007NotesRigidity,ConnellyBook}. Since a network in $\R^2$ is infinitesimally distance rigid if and only if it is infinitesimally bearing rigid \cite[Thm~8]{zhao2014TACBearing}, $(\G,p)$ is also infinitesimally bearing rigid in $\R^2$. Now lift the network from $\R^2$ to $\R^d$ by changing $p_i\in\R^2$ to $p_i'=[p_i^T ,0]^T \in\R^d$ and suppose $(\G,p')$ is the obtained network in $\R^d$. Since infinitesimal bearing rigidity is invariant to space dimensions \cite[Thm~7]{zhao2014TACBearing}, we know $(\G,p')$ is infinitesimally bearing rigid in $\R^d$. Therefore, $\G$ is generically bearing rigid in $\R^d$. Though simple, the above proof is not self-contained and requires prerequisites in distance rigidity theory.
\end{remark}

We next prove Corollary~\ref{corollary_LamanGraphAddEdge}. To do that, we first show that adding edges to a graph preserves its generic bearing rigidity.

\begin{lemma}\label{lemma_addEdgePreserveRigidity}
Given a graph $\G=(\V,\E)$, suppose $\G'=(\V,\E')$ where $\E'=\E\cup\{(i,j)\}$. If $\G$ is generically bearing rigid, then $\G'$ is also generically bearing rigid.
\end{lemma}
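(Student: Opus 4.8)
The plan is to reuse the very configuration that certifies generic bearing rigidity of $\G$ and to observe that inserting a single edge augments the bearing Laplacian by a positive semi-definite term only. First I would invoke the hypothesis: since $\G$ is generically bearing rigid, there is a configuration $p$ for which $(\G,p)$ is bearing rigid, so $\rank(\L)=dn-d-1$ by Lemma~\ref{lemma_NSConditionForBearingRigidity}, where $\L$ is the bearing Laplacian of $(\G,p)$. Keeping the same $p$, I would form the bearing Laplacian $\L'$ of $(\G',p)$; this is well defined because the standing assumption $p_i\ne p_j$ guarantees that the bearing $g_{ij}$ of the new edge exists.

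The key algebraic observation is the decomposition $\L'=\L+\Delta$, where $\Delta$ collects the contribution of the single new edge $(i,j)$: it places $P_{g_{ij}}$ in the $(i,i)$ and $(j,j)$ diagonal blocks and $-P_{g_{ij}}$ in the $(i,j)$ and $(j,i)$ off-diagonal blocks, with all other blocks zero. Reading off its quadratic form gives $x^T\Delta x=(x_i-x_j)^TP_{g_{ij}}(x_i-x_j)\ge0$ for every $x\in\R^{dn}$, so $\Delta$ is positive semi-definite; $\L$ is likewise positive semi-definite as already noted in the preliminaries.

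With both $\L$ and $\Delta$ positive semi-definite, Lemma~\ref{lemma_rankOfPSDMatSum} yields $\rank(\L')=\rank(\L+\Delta)\ge\rank(\L)=dn-d-1$. On the other hand, the universal upper bound recorded earlier—valid for any network on $n$ vertices, with $\myspan\{\one_n\otimes I_d,p\}\subseteq\Null(\L')$—gives $\rank(\L')\le dn-d-1$. Combining the two inequalities forces $\rank(\L')=dn-d-1$, so $(\G',p)$ is bearing rigid by Lemma~\ref{lemma_NSConditionForBearingRigidity}, and therefore $\G'$ is generically bearing rigid.

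I do not anticipate a genuine obstacle: once the splitting $\L'=\L+\Delta$ is recognized, the result is a one-line consequence of Lemma~\ref{lemma_rankOfPSDMatSum} together with the universal rank bound. The only points requiring a little care are that the same $p$ may legitimately be reused, since the added edge carries a nonzero and well-defined bearing, and that the upper bound $\rank(\L')\le dn-d-1$ is applied to $\G'$ on its $n$ vertices—adding an edge does not alter the vertex count, so the bound is unchanged.
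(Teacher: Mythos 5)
Your proposal is correct and follows essentially the same route as the paper's proof: the paper also reuses the certifying configuration $p$, writes $\L'=\L+\L_0$ where $\L_0$ is the (positive semi-definite) bearing Laplacian of the single-edge network $(\G_0,p)$ with $\E_0=\{(i,j)\}$ --- exactly your $\Delta$ --- and concludes via Lemma~\ref{lemma_rankOfPSDMatSum} together with the universal bound $\rank(\L')\le dn-d-1$. Your explicit verification that $\Delta$ is positive semi-definite via its quadratic form is a slightly more detailed justification of what the paper states by simply noting $\L_0$ is a bearing Laplacian, but the argument is identical in substance.
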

\begin{proof}
Since $\G$ is generically bearing rigid, there exists $p$ such that $(\G,p)$ is bearing rigid. Let $\L$ and $\L'$ be the bearing Laplacian matrices of $(\G,p)$ and $(\G',p)$, respectively. Then, $\rank(\L)=dn-d-1$. Since $\G'$ is obtained by adding one edge to $\G$, we have $\L'=\L+\L_0$ where $\L_0$ is the bearing Laplacian of the network $(\G_0,p)$ where $\G_0=(\V,\E_0)$ and $\E_0=\{(i,j)\}$. Since $\L_0$ is a bearing Laplacian, it is positive semi-definite. It then follows from Lemma~\ref{lemma_rankOfPSDMatSum} that $\rank(\L')=\rank(\L+\L_0)\ge\rank(\L)=dn-d-1$. Since $\rank(\L')\le dn-d-1$, we know $\rank(\L')=dn-d-1$ and consequently $(\G',p)$ is bearing rigid. Therefore, $\G'$ is generically bearing rigid by definition.
\end{proof}

\begin{proof}[\textbf{Proof of Corollary~\ref{corollary_LamanGraphAddEdge}}]
If $\G$ has a Laman spanning subgraph, $\G$ can be obtained by adding edges into the Laman graph. Since a Laman graph is generically bearing rigid, $\G$ is also generically bearing rigid by Lemma~\ref{lemma_addEdgePreserveRigidity}.
\end{proof}

\section{Discussion}
In this section, we give some comments on the results obtained in this paper.

As mentioned in the introduction section, Laman's Theorem in the distance rigidity theory is valid merely in the plane. In three or higher dimensional spaces, extra conditions and more edges are required to guarantee distance rigidity. The fundamental reason is that when lifted to a higher dimension, the degrees of freedom of each node in a network increase whereas the number of constraints posed by inter-neighbor distances does not.
As a comparison, in this paper we have showed that a Laman graph is generically bearing rigid in arbitrary dimensions and $2n-3$ edges would be sufficient to guarantee bearing rigidity in arbitrary dimensions. The fundamental reason is that when lifted to higher dimensions, the number of independent constraints posed by each inter-neighbor bearing also increases.

We now revisit the example in Fig.~\ref{fig_specialRigidExample}. This example may be counterintuitive because it shows that a network is not bearing rigid in a lower dimension yet another network with the same underlying graph is bearing rigid in a higher dimension. This phenomenon may also be explained by the number of independent constraints posed by a bearing. In particular, in order to ensure the cyclic network to be bearing rigid in $\R^2$, the bearings must provide $2n-3=5$ independent constraints. Each bearing in $\R^2$ is equivalent to a bearing angle and hence four bearings can provide at most four independent constraints. Since four is less than $2n-3=5$, the network in $\R^2$ is not bearing rigid. In order to ensure the cyclic network to be bearing rigid in $\R^3$, the bearings must provide $3n-4=8$ independent constraints. Each bearing in $\R^3$ is equivalent to two bearing angles and hence four bearings can provide at most eight independent constraints which is equal to $3n-4$. This is an intuitive way to explain why four bearings ensures the bearing rigidity of the cyclic network in Fig.~\ref{fig_specialRigidExample}(b). The bearing rigidity of the network in Fig.~\ref{fig_specialRigidExample}(b) has also been discussed in \cite[Fig.~5]{TronRigidComponent}.


There are several directions for future research. The first is the minimum number of edges required to ensure generic bearing rigidity of a graph. This paper has showed that this minimum number is $2n-3$ for the two-dimensional case, but it is still unclear for three or higher dimensions. Other problems such as merging multiple bearing rigid networks and optimal design of edges to maximize the bearing rigidity degree also deserve more research attention.

\bibliography{myOwnPub,zsyReferenceAll} 
\bibliographystyle{ieeetr}

\end{document}